\definecolor{dullmagenta}{rgb}{0.4,0,0.4}   
\definecolor{darkblue}{rgb}{0,0,0.4}
\title{Factorizations of Rational Matrix Functions with Application to Discrete Isomonodromic Transformations
and Difference Painlev\'e Equations}
\author{Anton Dzhamay \\
School of Mathematical Sciences\\ 
University of Northern Colorado\\
Greeley, CO 80639}
\begin{document}
	
\newtheorem{theorem}{Theorem}
\newtheorem{prop}[theorem]{Proposition}
\newtheorem{corollary}[theorem]{Corollary}
\newtheorem{lemma}[theorem]{Lemma} 
\theoremstyle{definition}
\newtheorem{definition}[theorem]{Definition}
\numberwithin{theorem}{section}
\numberwithin{equation}{section}

\newenvironment{remark}{$\triangleleft$ {\bf
Remark:}}{$\triangleright$\medskip}
\newenvironment{example}{$\triangleleft$ {\bf
Example}}{$\triangleright$\medskip}
\newenvironment{notation}{\noindent$\triangleleft$ {\bf
Notation:}}{$\triangleright$\medskip}
\newenvironment{assumption}{\noindent$\triangleleft$ {\bf
Assumption:}}{$\triangleright$\medskip}
\subjclass[2000]{39A10, 37K20}

\maketitle
\begin{abstract}
	We study factorizations of rational matrix functions with simple poles on the Riemann sphere. For the quadratic case
	(two poles) we show, using multiplicative representations of such matrix functions, that a good coordinate system 
	on this space is given by a mix of residue eigenvectors
	of the matrix and its inverse. Our approach is motivated by the theory of discrete isomonodromic transformations
	and their relationship with difference Painlev\'e equations. In particular, in these coordinates, basic 
	isomonodromic transformations take the form of the discrete Euler-Lagrange equations. Secondly we show that dPV
	equations, previously obtained in this context by D.~Arinkin and A.~Borodin, can be understood as simple
	relationships between the residues of such matrices and their inverses.
\end{abstract}


\section{Introduction} 
\label{sec:introduction}
In this paper we are concerned with various ways of introducing coordinates on the space of rational
matrix functions $\mathbf{L}(z)$ (with simple poles) on the Riemann sphere. Examples like this play an important role
in various applications, like Yang-Baxter maps and matrix solitons \cite{Ves:2003:YMAID,Veselov:2007qf}, 
Lax equations and isomonodromy transformations on algebraic curves, \cite{Kri:2002:IEACCTWE,Kri:2002:VBLEAC}, 
discrete integrable systems, \cite{MosVes:1991:DVSCISFMP,Sur:2004:DLM},
and others. Our main interest is related to the study of discrete isomonodromic transformations 
and their relationship with discrete Painlev\'e equations. The theory of discrete isomonodromic transformations 
was developed by A.~Borodin for polynomial matrices in \cite{Bor:2004:ITLSDE} and adapted to 
the notion of local monodromy using rational matrix functions by I.~Krichever in \cite{Kri:2004:ATDEWRECRP}.
In \cite{AriBor:2006:MSDDPE} D.~Arinkin and A.~Borodin used the theory of d-connections on vector bundles 
to explained the appearance of difference Painlev\'e equations, considered from the 
geometric point of view of H.~Sakai \cite{Sak:2001:RSAWARSGPE}, in the theory of discrete isomonodromic
transformations and later, in \cite{AriBor:2007:TDITP}, they introduced the notion of the $\tau$-function
of such transformation. These $\tau$-functions appear as the gap probabilities in the discrete
probabilistic models of random matrix type. 

I.~Krichever conjectured that discrete isomonodromic transformations can be written
in the Lagrangian form and that they should be related to the universal symplectic form 
of Krichever-Phong, \cite{KriPho:1998:SFTS}. 
In \cite{Dzhamay:2008yq}, using the methods of 
\cite{MosVes:1991:DVSCISFMP}, we verified this conjecture for the quadratic (two-pole) case
by using the multiplicative coordinates on the space of these matrix functions and finding
the explicit formula for the Lagrangian. Recently F.~Soloviev showed 
that our Lagrangian symplectic form coincides with the reduction of the quadratic
symplectic form of Krichever-Phong to certain symplectic leaves \cite{Sol:2008:QAAIC}.

Unfortunately, the multiplicative coordinates are not easily obtained from other 
characteristic properties of $\mathbf{L}(z)$, such as its residue matrices. This is 
an obstacle to the generalization of our results to the higher-order case.  
In this paper we argue that one way around this difficulty is to consider, in 
addition to the residues of $\mathbf{L}(z)$, the residues of the inverse 
matrix $\mathbf{L}^{-1}(z)$. Then good coordinates on the space of 
$\mathbf{L}(z)$, again in the quadratic case, are given by half of the residues data
of $\mathbf{L}(z)$ and half of the residue data of $\mathbf{L}^{-1}(z)$. 
Such residues also explain the almost symmetric form of the expressions
for the multiplicative coordinates of $\mathbf{L}(z)$
 and allow us to recognize
dPV equations as simple relations between the residues of $\mathbf{L}(z)^{\pm1}$ and the
residues of the transformed matrices $\tilde{\mathbf{L}}(z)^{\pm1}$.

The paper is organized as follows. In Section~\ref{sec:main_object_of_study} we give a short 
overview of the additive representation of $\mathbf{L}(z)$. 
In Section~\ref{sec:multiplicative_form_of_rational_matrix_functions} we study 
the multiplicative representation of $\mathbf{L}(z)$ and establish the 
relationship between the eigenvectors of the residues of $\mathbf{L}(z)^{\pm1}$
and the eigenvectors of the left and right-divisors of $\mathbf{L}(z)$. We also
significantly simplify many of the arguments and formulas of \cite{Dzhamay:2008yq}. 
Finally, in Section~\ref{sec:isomonodromic_transformations_and_dpv} we restrict our attention 
to the rank-two case. It this case it is possible to introduce the so-called spectral
coordinates on the space of $\mathbf{L}(z)$ in such a way that the equations relating
the spectral coordinates of $\mathbf{L}(z)$ and $\tilde{\mathbf{L}}(z)$ are precisely the
difference Painlev\'e equations. Our main observation here is the following. In addition
to spectral coordinates, divisor
of zeroes and poles, and some asymptotic behavior at infinity, 
the entries of the matrix $\mathbf{L}(z)$ also depend on a choice of a gauge
with respect the action of the group of constant non-degenerate diagonal matrices.
Understanding the change of this gauge from $\mathbf{L}(z)$ to $\mathbf{L}^{-1}(z)$ and
$\tilde{\mathbf{L}}(z)$ allows us to easily obtain the expressions of the multiplicative
representation of $\mathbf{L}(z)$ in the spectral coordinates and  the 
dPV equations.

\section{Additive form of Rational Matrix Functions} 
\label{sec:main_object_of_study}

Let $\mathbf{L}(z)$ be a rational matrix function on the Riemann sphere, 
$\operatorname{rank} \mathbf{L}(z) = m$, satisfying the following general conditions.
First, we require that there exists a normalization point $z_{0}$ at which $\mathbf{L}(z)$ is regular, 
$\lim_{z\to z_{0}} \mathbf{L}(z) = \mathbf{L}_{0}$, $\det \mathbf{L}_{0} \neq 0$, and
all eigenvalues of $\mathbf{L}_{0}$ are distinct.
We can then do a gauge transformation
to make $\mathbf{L}_{0}$ diagonal, thus reducing the global gauge group to the group 
of diagonal matrices. Without any loss of generality we can assume that $z_{0}=\infty$,
and so 
\begin{equation}
	\lim_{z\to\infty} \mathbf{L}(z) = \mathbf{L}_{0} = \operatorname{diag}\{\rho_{1},\dots, \rho_{m}\}.
\end{equation}

Second, we impose the following conditions on the pole structure of $\mathbf{L}(z)$ and its inverse
$\mathbf{M}(z) = \mathbf{L}(z)^{-1}$. We require that $\mathbf{L}(z)$ is holomorphic except for  
\emph{simple} poles at the points $z_{1},\dots, z_{k}$, $\mathbf{M}(z)$ is holomorphic except for 
 \emph{simple} poles at the points $\zeta_{1},\dots, \zeta_{k}$,  all $z_{i}$ and $\zeta_{j}$ 
are distinct, and the determinant $\det \mathbf{L}(z)$ has  also only \emph{simple} poles at $z_{i}$ and \emph{simple}
zeroes at $\zeta_{j}$.  These conditions mean that the residues $\mathbf{L}_{i}=\operatorname{res}_{z_{i}}\mathbf{L}(z)$
and $\mathbf{M}_{j}=-\operatorname{res}_{\zeta_{j}}\mathbf{M}(z)$ (the negative sign here is for future convenience) 
are matrices of rank one. Using the $\dag$ symbol to indicate a row vector, 
we have:
\begin{align}
	\mathbf{L}(z) &= \mathbf{L}_{0} + \sum_{i=1}^{k} \frac{\mathbf{L}_{i}}{z-z_{i}},
	\qquad\text{where } \mathbf{L}_{0}=\operatorname{diag}\{\rho_{1},\dots,\rho_{m}\}\text{ and }
	\mathbf{L}_{i} = \mathbf{a}_{i} \mathbf{b}^{\dag}_{i},\label{L(z)-props-a}\\
	\det \mathbf{L}(z) &= \rho_{1}\cdots \rho_{m} 
	\frac{\prod_{i=1}^{k} (z-\zeta_{i})}{\prod_{j=1}^{k} (z-z_{j})},\label{L(z)-props-b}\\
	\mathbf{L}(z)^{-1}= \mathbf{M}(z) &= \mathbf{M}_{0} - \sum_{i=j}^{k}\frac{\mathbf{M}_{j}}{z - \zeta_{j}},
	\qquad\text{where } \mathbf{M}_{0}=\mathbf{L}_{0}^{-1},\qquad
	\mathbf{M}_{j} = \mathbf{c}_{j} \mathbf{d}^{\dag}_{j}\label{L(z)-props-c},
\end{align}
We call the above representations of $\mathbf{L}(z)$ and 
$\mathbf{M}(z)$ \emph{additive representations} and the vectors $\mathbf{a}_{i}$, $\mathbf{b}_{i}^{\dag}$ 
(resp. $\mathbf{c}_{i}$, $\mathbf{d}_{i}^{\dag}$) \emph{additive eigenvectors} of 
$\mathbf{L}(z)$ (resp. $\mathbf{M}(z)$). Note that in terms of $\mathbf{L}(z)$ the vectors $\mathbf{c}_{j}$ 
(resp.~$\mathbf{d}_{j}^{\dag}$) can be characterized as the left (resp.~right) null-vectors of $\mathbf{L}(\zeta_{j})$,
and the similar statement is true for $\mathbf{M}(z_{i})$ and $\mathbf{a}_{i}$, $\mathbf{b}_{i}^{\dag}$.
By the divisor of $\mathbf{L}(z)$ we mean the divisor
$\mathcal{D}$ of its determinant, $\mathcal{D} = \sum_{i} z_{i} - \sum_{i} \zeta_{i}$. We denote the space of
matrices $\mathbf{L}(z)$ satisfying the conditions (\ref{L(z)-props-a})--(\ref{L(z)-props-c}) by 
$\mathcal{M}_{r}^{\mathcal{D}}$.

An important question is how to choose a good coordinate system on the space $\mathcal{M}_{r}^{\mathcal{D}}$.
For example, given $\mathbf{L}_{i}$, we can determine $\mathbf{a}_{i}$ and $\mathbf{b}_{i}^{\dag}$ only up
to a common scaling factor. This factor has to be adjusted to ensure that $\det \mathbf{L}(z)$ has
zeros at $\zeta_{i}$, which a complicated condition on $\operatorname{tr}(\mathbf{L}_{i})$. Same problem
is present for the collection $\{\mathbf{c}_{i}, \mathbf{d}_{i}^{\dag}\}$. Some insight for a good choice
of coordinates is provided by the study of the isospectral and isomonodromic transformations, which suggests
that half of the coordinates should be taken from the residues of $\mathbf{L}(z)$ and half from 
the residues of $\mathbf{M}(z)$. In fact, in the quadratic case, we have the following result.

\begin{theorem}\label{thm:coords} When $\mathbf{L}(z)$ has $k=2$ poles, 
	the vectors $(\mathbf{c}_{2},\mathbf{d}_{1}^{\dag}; \mathbf{a}_{2}, \mathbf{b}_{1}^{\dag})$,
	considered up to rescaling (i.e., as points in $\mathbb{P}^{r-1}$), are coordinates on the space
	$\mathcal{M}_{r}^{\mathcal{D}}$. To recover $\mathbf{L}^{\pm1}(z)$, consider the function
	\begin{align}
		\mathcal{L}((\mathbf{x}_{2},\mathbf{x}_{1}^{\dag}),(\mathbf{y}_{2},\mathbf{y}_{1}^{\dag})) &= 
		(z_{2} - z_{1}) \log(\mathbf{x}_{1}^{\dag} \mathbf{L}_{0} \mathbf{x}_{2}) + 
		(z_{1} - \zeta_{2}) \log(\mathbf{y}_{1}^{\dag} \mathbf{x}_{2}) \notag \\
		&\qquad + (\zeta_{2} - \zeta_{1})\log(\mathbf{y}_{1}^{\dag} \mathbf{L}_{0}^{-1} \mathbf{y}_{2}) +
		(\zeta_{1} - z_{2}) \log(\mathbf{x}_{1}^{\dag} \mathbf{y}_{2}).\label{eq:Lagrangian}
	\end{align} 
	Then
	\begin{align}
		\mathbf{a}_{1} &= -\frac{\partial \mathcal{L}}{\partial \mathbf{x}_{1}^{\dag}} 
			((\mathbf{c}_{2},\mathbf{d}_{1}^{\dag}), (\mathbf{a}_{2}, \mathbf{b}_{1}^{\dag})); & \qquad 
		\mathbf{b}_{2}^{\dag} &= \phantom{-}\frac{\partial \mathcal{L}}{\partial \mathbf{x}_{2}} 
			((\mathbf{c}_{2},\mathbf{d}_{1}^{\dag}), (\mathbf{a}_{2}, \mathbf{b}_{1}^{\dag})); \label{eq:Ls}\\
		\mathbf{c}_{1} &= \phantom{-}\frac{\partial \mathcal{L}}{\partial \mathbf{y}_{1}^{\dag}} 
			((\mathbf{c}_{2},\mathbf{d}_{1}^{\dag}), (\mathbf{a}_{2}, \mathbf{b}_{1}^{\dag})); & \qquad 
		\mathbf{d}_{2}^{\dag} &= -\frac{\partial \mathcal{L}}{\partial \mathbf{y}_{2}} 
			((\mathbf{c}_{2},\mathbf{d}_{1}^{\dag}), (\mathbf{a}_{2}, \mathbf{b}_{1}^{\dag})). \label{eq:Ms}
	\end{align}
\end{theorem} 

The proof of this theorem is based on the multiplicative representations of $\mathbf{L}(z)$, which we consider next.


\section{Multiplicative form of Rational Matrix Functions} 
\label{sec:multiplicative_form_of_rational_matrix_functions}

\subsection{Elementary Divisors} 
\label{ssec:elementary_divisors}
To change from the additive to a multiplicative 
representation, we first need to define the building blocks for it. We call such building blocks the
\emph{elementary divisors}.

\begin{definition}\label{def:el-divs} An \emph{elementary divisor} is a rational $m\times m$-matrix function 
	$\mathbf{B}(z)$ on the Riemann sphere of the form 
		\begin{equation}
		\mathbf{B}(z) = \mathbf{I} + \frac{\mathbf{G}}{z-z_{0}},\qquad\text{where } \mathbf{G} = 
		\mathbf{p} \mathbf{q}^{\dag}\text{ is a matrix of rank one.}
	\end{equation}
\end{definition}
A direct calculation establishes the following elementary facts.

\begin{lemma}\label{lem:el-div-prop}
	Let $B(z)$ be an elementary divisor. Then 
	\begin{enumerate}[(i)]
		\item $\det \mathbf{B}(z) = (z - \zeta_{0})/(z - z_{0})$, where $\zeta_{0} = z_{0} - \mathbf{q}^{\dag} \mathbf{p}$;
		\item $\mathbf{B}(z)^{-1} = \mathbf{I} - \mathbf{G}/(z - \zeta_{0})$.
	\end{enumerate}
\end{lemma}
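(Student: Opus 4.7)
The plan is to verify both claims by direct computation, using only the rank-one structure of $\mathbf{G} = \mathbf{p}\mathbf{q}^\dagger$.

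For part (i), I would apply the matrix determinant lemma (Sylvester's identity) in the form $\det(\mathbf{I} + \mathbf{u}\mathbf{v}^\dagger) = 1 + \mathbf{v}^\dagger\mathbf{u}$, taking $\mathbf{u} = \mathbf{p}/(z-z_0)$ and $\mathbf{v}^\dagger = \mathbf{q}^\dagger$. This immediately yields
\[
\det \mathbf{B}(z) = 1 + \frac{\mathbf{q}^\dagger \mathbf{p}}{z - z_0} = \frac{z - (z_0 - \mathbf{q}^\dagger\mathbf{p})}{z - z_0} = \frac{z - \zeta_0}{z - z_0},
\]
which matches the definition $\zeta_0 = z_0 - \mathbf{q}^\dagger\mathbf{p}$. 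If one wishes to avoid citing the matrix determinant lemma, the same identity follows by choosing a basis whose first vector is $\mathbf{p}$, relative to which $\mathbf{B}(z)$ is upper triangular with diagonal entries $1 + \mathbf{q}^\dagger\mathbf{p}/(z-z_0), 1, \ldots, 1$.

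For part (ii), I would set $\mathbf{C}(z) := \mathbf{I} - \mathbf{G}/(z - \zeta_0)$ and check $\mathbf{B}(z)\mathbf{C}(z) = \mathbf{I}$ by expanding. The key observation is the rank-one identity
\[
\mathbf{G}^2 = \mathbf{p}(\mathbf{q}^\dagger\mathbf{p})\mathbf{q}^\dagger = (\mathbf{q}^\dagger\mathbf{p})\,\mathbf{G} = (z_0 - \zeta_0)\,\mathbf{G}.
\]
Expanding the product gives
\[
\mathbf{B}(z)\mathbf{C}(z) = \mathbf{I} + \frac{\mathbf{G}}{z-z_0} - \frac{\mathbf{G}}{z-\zeta_0} - \frac{(z_0-\zeta_0)\mathbf{G}}{(z-z_0)(z-\zeta_0)},
\]
and combining the last three terms over the common denominator $(z-z_0)(z-\zeta_0)$ produces the numerator $\mathbf{G}\bigl[(z-\zeta_0) - (z-z_0) - (z_0-\zeta_0)\bigr] = 0$. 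Hence $\mathbf{C}(z) = \mathbf{B}(z)^{-1}$, which is the claimed formula.

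There is no real obstacle here; the only mild subtlety is recognizing the scalar $\mathbf{q}^\dagger\mathbf{p}$ simultaneously as the shift $z_0 - \zeta_0$ entering the determinant and as the eigenvalue of $\mathbf{G}$ on its image, which is what makes the three rational terms in $\mathbf{B}\mathbf{C}$ cancel exactly. The symmetry between (i) and (ii), namely that inverting $\mathbf{B}(z)$ amounts to swapping the roles of its pole and its zero while flipping the sign of $\mathbf{G}$, is the structural content of the lemma and will be used freely in the multiplicative factorization that follows.
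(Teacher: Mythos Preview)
Your proof is correct and is exactly the ``direct calculation'' the paper alludes to but omits; the paper provides no further argument beyond stating that these facts follow from a direct computation. Your use of the matrix determinant lemma for (i) and the rank-one identity $\mathbf{G}^{2}=(z_{0}-\zeta_{0})\mathbf{G}$ for (ii) is the standard and intended route.
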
	

In fact, for us it will be more convenient to fix the points $\zeta_{0}$ and $z_{0}$ on $\mathbb{CP}^{1}$. 
Thus, we say that a pair $(\zeta_{0},z_{0})$ corresponds to an elementary divisor $\mathbf{B}(z)$ 
with the determinant $\det\mathbf{B}(z) = (z-\zeta_{0})/(z-z_{0})$. Any generic matrix $\mathbf{B}(z)$ 
with such determinant and normalized by the condition $\mathbf{B}(z)\to \mathbf{I}$ as $z\to\infty$ is of the form
$\mathbf{B}(z) = \mathbf{I} + \mathbf{G}/(z-z_{0})$ with $\mathbf{G}$ of rank one and $\det \mathbf{B}(\zeta_{0})=0$
(more carefully, $\mathbf{B}(z)^{-1}$ should be regular at $z_{0}$ and $\mathbf{B}(z)$ should be regular 
at $\zeta_{0}$). 
Thus, $\mathbf{B}(\zeta_{0})$ has a left null-vector $\mathbf{q}^{\dag}$ and a right null-vector $\mathbf{p}$.
We then immediately get that $\mathbf{G}=\mathbf{p} \mathbf{q}^{\dag}$, where we need to normalize the vectors
$\mathbf{p}$ and $\mathbf{q}^{\dag}$ so that $\mathbf{q}^{\dag} \mathbf{p} = z_{0} - \zeta_{0}$. In a more
invariant form this can be written as
\begin{equation}
	\mathbf{B}(z) = \mathbf{I} + \frac{z_{0} - \zeta_{0}}{z - z_{0}} 
	\frac{\mathbf{p} \mathbf{q}^{\dag}}{\mathbf{q}^{\dag} \mathbf{p}}.
\end{equation}
From that point of view, the formula for the inverse matrix follows from the vanishing of the residue of the identity
$\mathbf{B}(z)\mathbf{B}(z)^{-1} = \mathbf{I} = \mathbf{B}(z)^{-1} \mathbf{B}(z)$ at $z_{0}$,
\begin{equation}
	\mathbf{B}(z)^{-1} = \mathbf{I} + \frac{\zeta_{0} - z_{0}}{z - \zeta_{0}} 
	\frac{\mathbf{p} \mathbf{q}^{\dag}}{\mathbf{q}^{\dag} \mathbf{p}}.
\end{equation} 

The following easy properties of elementary divisors are very useful for what follows.

\begin{lemma}\label{lem:el-div-solve}
	Let 
	\begin{equation}
		\mathbf{B}(z) = \mathbf{I} + \frac{z_{0} - \zeta_{0}}{z - z_{0}} 
		\frac{\mathbf{p} \mathbf{q}^{\dag}}{\mathbf{q}^{\dag} \mathbf{p}}.
	\end{equation}
	Then
	\begin{enumerate}[(i)]
		\item 
		\begin{equation}
			\mathbf{B}(z)\mathbf{p} = \left(\frac{z - \zeta_{0}}{z-z_{0}}\right) \mathbf{p}
			\qquad \text{and} \qquad
			\mathbf{q}^{\dag} \mathbf{B}(z) = \left(\frac{z - \zeta_{0}}{z-z_{0}}\right) \mathbf{q}^{\dag}.
		\end{equation}
		\item Suppose that at some point $z^{*}$ we have $\mathbf{B}(z^{*})\mathbf{w} = \mathbf{v}$
		for some vectors $\mathbf{v}$ and $\mathbf{w}$. Then
		\begin{equation}
			\mathbf{B}(z) = \mathbf{I}  + \frac{1}{z-z_{0}} 
			\left( (z_{0} - z^{*}) \frac{ \mathbf{w} \mathbf{q}^{\dag}}{\mathbf{q}^{\dag} \mathbf{w}} + 
			(z^{*} - \zeta_{0}) \frac{ \mathbf{v} \mathbf{q}^{\dag}}{\mathbf{q}^{\dag} \mathbf{v}} \right)
		\end{equation}
		(i.e., we can determine $\mathbf{p}$ from $\mathbf{v}$, $\mathbf{w}$, and $\mathbf{q}^{\dag}$).
		Similarly, if $\mathbf{w}^{\dag} \mathbf{B}(z^{*}) = \mathbf{v}^{\dag}$,
		\begin{equation}
			\mathbf{B}(z) = \mathbf{I}  + \frac{1}{z-z_{0}} 
			\left( (z_{0} - z^{*}) \frac{ \mathbf{p} \mathbf{w}^{\dag}}{\mathbf{w}^{\dag} \mathbf{p}} + 
			(z^{*} - \zeta_{0}) \frac{ \mathbf{p} \mathbf{v}^{\dag}}{\mathbf{v}^{\dag} \mathbf{p}} \right).
		\end{equation}
	\end{enumerate}
\end{lemma}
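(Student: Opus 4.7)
My plan is to dispatch part (i) by direct substitution and then bootstrap it to establish part (ii), with the second formula of (ii) obtained by symmetry.

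For part (i), I would just apply $\mathbf{B}(z)$ to $\mathbf{p}$. The scalar $\mathbf{q}^{\dag} \mathbf{p}$ in the denominator cancels the $\mathbf{q}^{\dag} \mathbf{p}$ that appears when $\mathbf{q}^{\dag}$ meets $\mathbf{p}$, giving $\mathbf{B}(z)\mathbf{p} = \mathbf{p} + \frac{z_{0} - \zeta_{0}}{z - z_{0}}\mathbf{p}$, which collapses to $\frac{z-\zeta_{0}}{z-z_{0}}\mathbf{p}$ after combining fractions. The calculation for $\mathbf{q}^{\dag} \mathbf{B}(z)$ is the literal mirror image. Part (i) really just records that $\mathbf{p}$ and $\mathbf{q}^{\dag}$ are the right and left eigenvectors and identifies the common eigenvalue $(z-\zeta_{0})/(z-z_{0})$.

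For part (ii), the conceptual observation is that $\mathbf{B}(z) - \mathbf{I}$ has rank one with column space equal to the line through $\mathbf{p}$. Hence $\mathbf{v} - \mathbf{w} = (\mathbf{B}(z^{*}) - \mathbf{I})\mathbf{w}$ must be a scalar multiple of $\mathbf{p}$, and since the defining formula for $\mathbf{B}(z)$ is homogeneous of degree zero in $\mathbf{p}$, this already pins down $\mathbf{B}(z)$ once $\mathbf{q}^{\dag}$ is fixed. So the claimed formula only needs to be checked. Matching rank-one matrices that share the common right factor $\mathbf{q}^{\dag}$, the identity to prove is
\begin{equation*}
(z_{0} - \zeta_{0})\frac{\mathbf{p}}{\mathbf{q}^{\dag} \mathbf{p}} = (z_{0} - z^{*})\frac{\mathbf{w}}{\mathbf{q}^{\dag} \mathbf{w}} + (z^{*} - \zeta_{0})\frac{\mathbf{v}}{\mathbf{q}^{\dag} \mathbf{v}}.
\end{equation*}
Writing $\mathbf{v} = \mathbf{w} + \mu \mathbf{p}$ with $\mu$ read off from part (i) applied at $z^{*}$, a short computation gives $\mathbf{q}^{\dag} \mathbf{v}/(z^{*} - \zeta_{0}) = \mathbf{q}^{\dag} \mathbf{w}/(z^{*} - z_{0})$. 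Substituting this on the right-hand side, the coefficient of $\mathbf{w}$ telescopes to zero and the coefficient of $\mathbf{p}$ simplifies to exactly $(z_{0} - \zeta_{0})/(\mathbf{q}^{\dag} \mathbf{p})$.

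The dual statement, with $\mathbf{w}^{\dag} \mathbf{B}(z^{*}) = \mathbf{v}^{\dag}$, follows by applying the same argument to the transpose, or equivalently by using the left-eigenvector part of (i) in place of the right-eigenvector part. I do not expect any real obstacle beyond bookkeeping in the scalar identity above; the entire conceptual content is carried by the rank-one structure of $\mathbf{B}(z) - \mathbf{I}$ together with the eigenvalue identity from part (i).
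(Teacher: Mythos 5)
Your proposal is correct and follows essentially the same route as the paper: part (i) by direct substitution, and part (ii) reduced, via the eigenvalue relation $\mathbf{q}^{\dag}\mathbf{v}/(z^{*}-\zeta_{0})=\mathbf{q}^{\dag}\mathbf{w}/(z^{*}-z_{0})$ from (i), to the same scalar/vector identity expressing $\mathbf{p}/(\mathbf{q}^{\dag}\mathbf{p})$ in terms of $\mathbf{w}$, $\mathbf{v}$, and $\mathbf{q}^{\dag}$, with the dual formula by symmetry. The only cosmetic difference is that you verify the claimed formula (plus a brief uniqueness remark on the rank-one structure) whereas the paper solves $\mathbf{B}(z^{*})\mathbf{w}=\mathbf{v}$ directly for $\mathbf{p}/(\mathbf{q}^{\dag}\mathbf{p})$; the content is identical.
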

\begin{proof}
	Part (i) is immediate. To establish the first formula in part (ii), we solve $\mathbf{B}(z^{*})\mathbf{w}=\mathbf{v}$
	for $\mathbf{p}/(\mathbf{q}^{\dag}\mathbf{p})$ and then use (i):
	\begin{equation}
		\frac{\mathbf{p}}{\mathbf{q}^{\dag}\mathbf{p}} = 
		\frac{z_{0} - z^{*}}{z_{0} - \zeta_{0}} \frac{\mathbf{w}}{\mathbf{q}^{\dag} \mathbf{w}}
		+ \frac{z^{*} - z_{0}}{z_{0} - \zeta_{0}} \frac{\mathbf{v}}{\mathbf{q}^{\dag} \mathbf{w}}
		= \frac{z_{0} - z^{*}}{z_{0} - \zeta_{0}} \frac{\mathbf{w}}{\mathbf{q}^{\dag} \mathbf{w}}
		+ \frac{z^{*} - \zeta_{0}}{z_{0} - \zeta_{0}} \frac{\mathbf{v}}{\mathbf{q}^{\dag} \mathbf{v}}.
	\end{equation}
	The second formula is proved in the similar way.
\end{proof}

Finally, we need the following notation.

\begin{notation} We define a \emph{twisting} of an elementary divisor $\mathbf{B}(z)$ by a constant
	non-degenerate matrix $\mathbf{A}$ to be a new elementary divisor ${^{\mathbf{A}}}\mathbf{B}(z)$
	such that ${^{\mathbf{A}}}\mathbf{B}(z) \mathbf{A} = \mathbf{A} \mathbf{B}(z)$, i.e., 
	\begin{equation}
		{^{\mathbf{A}}}\mathbf{B}(z) = \mathbf{A} \mathbf{B}(z) \mathbf{A}^{-1} = 
		\mathbf{I} + \frac{z_{0} - \zeta_{0}}{z - z_{0}} 
		\frac{(\mathbf{A} \mathbf{p})(\mathbf{q}^{\dag} \mathbf{A}^{-1})}{
		(\mathbf{q}^{\dag} \mathbf{A}^{-1}) (\mathbf{A} \mathbf{p})}.
	\end{equation}
\end{notation}


\subsection{Factors and Divisors}\label{ssec:factors_and_divisors} 

We begin with the following important remark. For the additive representation of $\mathbf{L}(z)$, the ordering
of zeroes and poles of $\det \mathbf{L}(z)$ is not important, but for any multiplicative representation 
choosing such an ordering is crucial. Thus, from now on our labeling will reflect the fact that 
$(\zeta_{s},z_{s})$-pair corresponds to some elementary divisor in $\mathbf{L}(z)$. 
There are two ways to look at the multiplicative structure of $\mathbf{L}(z)$ --- we can look at 
\emph{factors} or at \emph{divisors}.

\begin{definition}\label{def:factors} We say that elementary divisors 
	$\mathbf{C}_{s}(z) = \mathbf{I} + \mathbf{H}_{s}/(z-z_{s})$ corresponding to pairs 
	$(\zeta_{s},z_{s})$ with $\mathbf{H}_{s} = \mathbf{m}_{s} \mathbf{n}_{s}^{\dag}$ are the
	\emph{factors} of $\mathbf{L}(z)$ if 
	\begin{equation}
		\mathbf{L}(z) = \mathbf{L}_{0} \mathbf{C}_{1}(z) \cdots \mathbf{C}_{k}(z).
	\end{equation}	
\end{definition}

\begin{definition}\label{def:divisors} We say that elementary divisors $\mathbf{B}^{r}_{s}(z)$ (resp. 
	$\mathbf{B}^{l}_{s}(z)$) corresponding to pairs $(\zeta_{s},z_{s})$ are \emph{right} (resp. \emph{left})
	\emph{divisors} of $\mathbf{L}(z)$ if $\mathbf{L}(z) = \mathbf{L}^{r}_{s}(z) \mathbf{B}^{r}_{s}(z)$ 
	(resp. $\mathbf{L}(z) = \mathbf{B}^{l}_{s}(z)\mathbf{L}^{l}_{s}(z)$)
	where $\mathbf{L}^{r}_{s}(z)$ (resp. $\mathbf{L}^{l}_{s}(z)$) is regular at $z_{s}$.	
\end{definition}

The main advantage of the divisors as opposed to the factors is that they can be written explicitly in terms of 
the residues of $\mathbf{L}(z)$ and $\mathbf{M}(z)$. Note that $\mathbf{C}_{k}(z) = \mathbf{B}^{r}_{k}(z)$
and ${^{\mathbf{L}_{0}}}\mathbf{C}_{1}(z) = \mathbf{B}^{l}_{1}(z)$. In particular, in the quadratic case
$k=2$, there is no essential difference between divisors and factors.

\begin{lemma}\label{lem:divisors-formula}
	Let $\mathbf{L}_{s} = \operatorname{res}_{z_{s}} \mathbf{L}(z) = \mathbf{a}_{s} \mathbf{b}_{s}^{\dag}$
	and $\mathbf{M}_{s} = -\operatorname{res}_{\zeta_{s}} \mathbf{M}(z) = \mathbf{c}_{s} \mathbf{d}_{s}^{\dag}$.
	Then
	\begin{align}
		\mathbf{B}^{r}_{s}(z) &= \mathbf{I} + \frac{z_{s} - \zeta_{s}}{z - z_{s}} 
		\frac{\mathbf{c}_{s} \mathbf{b}_{s}^{\dag}}{\mathbf{b}_{s}^{\dag} \mathbf{c}_{s}} \label{eq:Br}\\
		\mathbf{B}^{l}_{s}(z) &= \mathbf{I} + \frac{z_{s} - \zeta_{s}}{z - z_{s}} 
		\frac{\mathbf{a}_{s} \mathbf{d}_{s}^{\dag}}{\mathbf{d}_{s}^{\dag} \mathbf{a}_{s}}. \label{eq:Bl}
	\end{align}
\end{lemma}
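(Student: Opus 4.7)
The plan is to identify the vectors in the canonical form
\begin{equation*}
\mathbf{B}^r_s(z) = \mathbf{I} + \frac{z_s-\zeta_s}{z-z_s}\frac{\mathbf{p}\mathbf{q}^\dag}{\mathbf{q}^\dag\mathbf{p}}
\end{equation*}
by reading off null-vectors of $\mathbf{B}^r_s(\zeta_s)$. Recall from the discussion preceding Lemma~\ref{lem:el-div-solve} that $\mathbf{p}$ and $\mathbf{q}^\dag$ are exactly the right and left null-vectors of $\mathbf{B}^r_s(\zeta_s)$, and note that the ratio $\mathbf{p}\mathbf{q}^\dag/\mathbf{q}^\dag\mathbf{p}$ is invariant under independent rescalings of $\mathbf{p}$ and $\mathbf{q}^\dag$. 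So it suffices to verify $\mathbf{B}^r_s(\zeta_s)\mathbf{c}_s = 0$ and $\mathbf{b}_s^\dag\mathbf{B}^r_s(\zeta_s) = 0$, after which the substitutions $\mathbf{p}=\mathbf{c}_s$, $\mathbf{q}^\dag=\mathbf{b}_s^\dag$ yield \eqref{eq:Br}.

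For the right null-vector, I would use the factorization $\mathbf{L}^r_s(z)^{-1} = \mathbf{B}^r_s(z)\mathbf{M}(z)$. The preliminary observation needed is that $\mathbf{L}^r_s(z)^{-1}$ is regular at $\zeta_s$: by Lemma~\ref{lem:el-div-prop}(i), $\det\mathbf{B}^r_s(z) = (z-\zeta_s)/(z-z_s)$, so dividing \eqref{L(z)-props-b} by this cancels the $(z-\zeta_s)$ factor in the numerator and the $(z-z_s)$ factor in the denominator, making $\det\mathbf{L}^r_s(z)$ holomorphic and nonvanishing at $\zeta_s$. Taking $\operatorname{res}_{\zeta_s}$ of the factorization then gives $0 = \mathbf{B}^r_s(\zeta_s)\operatorname{res}_{\zeta_s}\mathbf{M}(z) = -\mathbf{B}^r_s(\zeta_s)\mathbf{c}_s\mathbf{d}_s^\dag$, forcing $\mathbf{B}^r_s(\zeta_s)\mathbf{c}_s = 0$.

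For the left null-vector I would apply the mirror identity $\mathbf{L}^r_s(z) = \mathbf{L}(z)\mathbf{B}^r_s(z)^{-1}$. Here $\mathbf{B}^r_s(z)^{-1}$ is regular at $z_s$ by Lemma~\ref{lem:el-div-prop}(ii), and the product is regular at $z_s$ by Definition~\ref{def:divisors}, so $\operatorname{res}_{z_s}$ yields $\mathbf{a}_s\mathbf{b}_s^\dag\mathbf{B}^r_s(z_s)^{-1} = 0$, i.e.\ $\mathbf{b}_s^\dag\mathbf{B}^r_s(z_s)^{-1} = 0$. A one-line calculation shows $\mathbf{B}^r_s(z_s)^{-1} = \mathbf{I} - \mathbf{p}\mathbf{q}^\dag/(\mathbf{q}^\dag\mathbf{p}) = \mathbf{B}^r_s(\zeta_s)$, so $\mathbf{b}_s^\dag$ is a left null-vector of $\mathbf{B}^r_s(\zeta_s)$ as needed, completing \eqref{eq:Br}.

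The formula \eqref{eq:Bl} follows from the symmetric argument with left and right swapped: taking $\operatorname{res}_{z_s}$ of $\mathbf{L}^l_s(z) = \mathbf{B}^l_s(z)^{-1}\mathbf{L}(z)$ identifies $\mathbf{a}_s$ as a right null-vector of $\mathbf{B}^l_s(\zeta_s)$, and taking $\operatorname{res}_{\zeta_s}$ of $\mathbf{L}^l_s(z)^{-1} = \mathbf{M}(z)\mathbf{B}^l_s(z)$ identifies $\mathbf{d}_s^\dag$ as a left null-vector. I do not expect a real obstacle anywhere: the only substantive point is the regularity check via divisor cancellation, and it is uniform across all four cases. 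The rest is a mechanical residue comparison combined with the scale invariance of $\mathbf{p}\mathbf{q}^\dag/\mathbf{q}^\dag\mathbf{p}$.
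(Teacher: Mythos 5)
Your overall route is the same as the paper's: you compare residues in the two factorization identities, $\mathbf{L}(z)=\mathbf{L}^{r}_{s}(z)\mathbf{B}^{r}_{s}(z)$ at $z_{s}$ to get $\mathbf{q}^{\dag}\sim\mathbf{b}_{s}^{\dag}$, and $\mathbf{M}(z)=(\mathbf{B}^{r}_{s}(z))^{-1}(\mathbf{L}^{r}_{s}(z))^{-1}$ at $\zeta_{s}$ to get $\mathbf{p}\sim\mathbf{c}_{s}$, and then invoke the scale invariance of $\mathbf{p}\mathbf{q}^{\dag}/(\mathbf{q}^{\dag}\mathbf{p})$; phrasing this via the null vectors of $\mathbf{B}^{r}_{s}(\zeta_{s})$ and the identity $\mathbf{B}^{r}_{s}(z_{s})^{-1}=\mathbf{B}^{r}_{s}(\zeta_{s})$ is only a mild repackaging of the paper's computation, and the mirrored argument for \eqref{eq:Bl} is likewise the same.

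The one step that does not hold up is your justification of the ``preliminary observation'' that $\mathbf{L}^{r}_{s}(z)^{-1}$ is regular at $\zeta_{s}$. Holomorphy and nonvanishing of $\det\mathbf{L}^{r}_{s}(z)$ at $\zeta_{s}$ does \emph{not} imply that $\mathbf{L}^{r}_{s}(z)$ or its inverse is regular there: a simple pole whose residue has rank one can be invisible to the determinant, e.g. $\left(\begin{smallmatrix} 1 & (z-\zeta_{s})^{-1}\\ 0 & 1\end{smallmatrix}\right)$ has determinant identically $1$. The point is not cosmetic here: with only the condition literally stated in Definition~\ref{def:divisors} (regularity of $\mathbf{L}^{r}_{s}$ at $z_{s}$), the residue computation at $z_{s}$ pins down $\mathbf{q}^{\dag}\sim\mathbf{b}_{s}^{\dag}$ but leaves $\mathbf{p}$ free, and the residue of $\mathbf{L}(z)\mathbf{B}^{r}_{s}(z)^{-1}$ at $\zeta_{s}$ is proportional to $\mathbf{L}(\zeta_{s})\,\mathbf{p}\,\mathbf{q}^{\dag}$; since $\ker\mathbf{L}(\zeta_{s})=\operatorname{span}\{\mathbf{c}_{s}\}$, regularity of $\mathbf{L}^{r}_{s}$ (equivalently of its inverse) at $\zeta_{s}$ is \emph{equivalent} to the conclusion $\mathbf{p}\sim\mathbf{c}_{s}$ you are proving, so it cannot be extracted from determinant bookkeeping. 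What closes the gap is the intended reading of a divisor corresponding to the pair $(\zeta_{s},z_{s})$: the complementary factor $\mathbf{L}^{r}_{s}$ carries the reduced divisor, hence is regular with nondegenerate value at $\zeta_{s}$. This is precisely the hypothesis the paper's own proof uses silently when it takes $\operatorname{res}_{\zeta_{s}}$ of $\mathbf{M}(z)=(\mathbf{B}^{r}_{s}(z))^{-1}(\mathbf{L}^{r}_{s}(z))^{-1}$; if you replace your determinant argument by this assumption (or by the genericity built into the setup), the rest of your residue computation is correct and coincides with the paper's proof.
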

\begin{proof} To obtain the formula for the right divisor, we take the residue of
	$\mathbf{L}(z) = \mathbf{L}^{r}_{s}(z) \mathbf{B}^{r}_{s}(z)$ at $z_{s}$ to get 
	$\mathbf{b}_{s}^{\dag}\sim (\mathbf{q}^{r}_{s})^{\dag}$. Then we take the residue 
	of $\mathbf{M}(z) = (\mathbf{B}^{r}_{s}(z))^{-1} (\mathbf{L}^{r}_{s}(z))^{-1}$ at 
	$\zeta_{s}$ to get $\mathbf{c}_{s}\sim \mathbf{p}_{s}^{r}$. The formula then follows. 
	The expression for the left divisors is obtained in the same way.
\end{proof}

\subsection{Refactorization Transformations}\label{ssec:refactorization_transformations} 
As shown in \cite{Bor:2004:ITLSDE}, \cite{Kri:2004:ATDEWRECRP}, isomonodromic transformations on the 
space of rational matrix functions $\mathbf{L}(z)$ have the form 
\begin{equation}
	\mathbf{L}(z)\mapsto \tilde{\mathbf{L}}(z) = \mathbf{R}(z+1) \mathbf{L}(z) \mathbf{R}(z)^{-1},
\end{equation}
where $\mathbf{R}(z)$ is a certain rational matrix function. Similarly, we can consider
isospectral transformations
\begin{equation}
	\mathbf{L}(z)\mapsto \tilde{\mathbf{L}}(z) = \mathbf{R}(z) \mathbf{L}(z) \mathbf{R}(z)^{-1}.
\end{equation}
In the isospectral case we have to choose $\mathbf{R}(z)$ in such a way that the singularity structure of $\tilde{\mathbf{L}}(z)$ is 
the same as $\mathbf{L}(z)$, and in the isomonodromic case we require that $\mathbf{R}(z)$ induces integral shifts 
of certain parameters corresponding to the linear difference system given by $\mathbf{L}(z)$ 
(see \cite{Bor:2004:ITLSDE}, \cite{AriBor:2006:MSDDPE} for details). In particular, these requirements are satisfied if we take
$\mathbf{R}(z)$ to be one of the divisors of $\mathbf{L}(z)$, which corresponds to changing
the order of the factors in the multiplicative representation of $\mathbf{L}(z)$; in the isospectral case the divisor $\mathcal{D}$ is fixed and in
the isomonodromic case $\mathcal{D}$ is shifted by an integer vector. In what follows we only consider transformations of this particular 
type. We focus on the isospectral case, since the isomonodromic case is very similar.

Let us now restrict our attention to the quadratic case and take
$\mathbf{R}(z) = \mathbf{B}_{1}^{r}(z)$:
\begin{align}
	\mathbf{L}(z) &= \mathbf{B}_{2}^{l}(z) \mathbf{L}_{0} \mathbf{B}^{r}_{1}(z) \mapsto 
	\tilde{\mathbf{L}}(z)=\mathbf{B}^{r}_{1}(z) \mathbf{B}^{l}_{2}(z) \mathbf{L}_{0} = 
	\tilde{\mathbf{B}}_{2}^{l}(z) \mathbf{L}_{0} \tilde{\mathbf{B}}_{1}^{r}(z).
\end{align}

To study these transformation from the point of view of the discrete Euler-Lagrange equations it is necessary to 
construct a configuration space $\mathcal{Q}$, a Lagrangian function $\mathcal{L}\in \mathcal{F}(\mathcal{Q}\times \mathcal{Q})$,
and a map $\eta: \mathcal{Q}\times \mathcal{Q}\to \mathcal{M}_{r}^{\mathcal{D}}$ such that 
$\eta$ maps the Lagrangian dynamics to the refactorization dynamics. That is, denoting
by $\undertilde{\mathbf{Q}}$ the previous and by $\tilde{\mathbf{Q}}$ the next
point of the discrete dynamics, we want
$\eta(\undertilde{\mathbf{Q}},\mathbf{Q}) = \mathbf{L}(z)$ and 
$\eta(\mathbf{Q},\tilde{\mathbf{Q}}) = \tilde{\mathbf{L}}(z)$, where the shift map 
$\mathbf{\Phi}:(\undertilde{\mathbf{Q}},\mathbf{Q})\to (\mathbf{Q},\tilde{\mathbf{Q}})$ should satisfy the discrete Euler-Lagrange equations
\begin{equation}
	\frac{\partial \mathcal{L}}{\partial \mathbf{Y}}(\undertilde{\mathbf{Q}},\mathbf{Q}) + 
		\frac{\partial \mathcal{L}}{\partial \mathbf{X}}(\mathbf{Q},\tilde{\mathbf{Q}}) = 0,
\end{equation}
 see \cite{MosVes:1991:DVSCISFMP}, \cite{Ves:1991:ILRFMP}
for the general description of this approach, and \cite{Dzhamay:2008yq} for our specific situation.

Writing 
\begin{equation}
	\tilde{\mathbf{L}}(z) = \mathbf{B}^{r}_{1}(z) \mathbf{B}^{l}_{2}(z) \mathbf{L}_{0} = 
	\tilde{\mathbf{B}}_{1}^{l}(z) \mathbf{L}_{0} \tilde{\mathbf{B}}_{2}^{r}(z)
\end{equation}
and using the uniqueness of divisors, we see that $\mathbf{B}_{1}^{r}(z) = \tilde{\mathbf{B}}_{1}^{l}(z)$ and
$\mathbf{B}_{2}^{l}(z)\mathbf{L}_{0} = \mathbf{L}_{0} \tilde{\mathbf{B}}_{2}^{r}(z)$. From 
Lemma~\ref{lem:divisors-formula} it then follows that
$\mathbf{c}_{1} = \tilde{\mathbf{a}}_{1}$, $\mathbf{b}_{1}^{\dag} = \tilde{\mathbf{d}}_{1}^{\dag}$,
$\mathbf{a}_{2} = \mathbf{L}_{0} \tilde{\mathbf{c}}_{2}$, and $\mathbf{d}_{2}^{\dag} \mathbf{L}_{0} = \tilde{\mathbf{b}}_{2}^{\dag}$.
Since we want to parametrize $\mathbf{L}(z)$ by $(\undertilde{\mathbf{Q}},\mathbf{Q})$, we see that if we take 
$\mathbf{Q} = (\mathbf{a}_{2}, \mathbf{b}_{1}^{\dag})$ as one half of our coordinates, the second half  should be 
$\undertilde{\mathbf{Q}} = (\undertilde{\mathbf{a}_{2}}, \undertilde{\mathbf{b}_{1}^{\dag}}) 
= (\mathbf{L}_{0} \mathbf{c}_{2}, \mathbf{d}_{1}^{\dag})$. Now the proof of 
Theorem~\ref{thm:coords} is the same as the proof of Theorem~3.1 of \cite{Dzhamay:2008yq}, and it is sketched below.

\begin{proof}(Theorem~\ref{thm:coords})\label{pf:-coords}
	Taking the residue of 
	\begin{equation}
		\mathbf{L}(z) = \mathbf{B}_{1}^{l}(z) \mathbf{L}_{0} \mathbf{B}_{2}^{r}(z) = \mathbf{B}_{2}^{l}(z) \mathbf{L}_{0} \mathbf{B}_{1}^{r}(z)
	\end{equation}
	at the point $z_{1}$ and comparing the row spaces of the resulting rank-one matrices gives 
	$\mathbf{d}_{1}^{\dag} \mathbf{L}_{0} \mathbf{B}_{2}^{r}(z_{1}) = \mathbf{b}_{1}^{\dag}$. Using Lemma~\ref{lem:el-div-solve}(ii) we can 
	find the formula for $\mathbf{b}_{2}^{\dag}$,
	\begin{equation}
		\mathbf{b}_{2}^{\dag} = (z_{2} - z_{1}) \frac{\mathbf{d}_{1}^{\dag}\mathbf{L}_{0}}{\mathbf{d}_{1}^{\dag} \mathbf{L}_{0} \mathbf{c}_{2}} +
		(z_{1} - \zeta_{2})\frac{\mathbf{b}_{1}^{\dag}}{\mathbf{b}_{1}^{\dag} \mathbf{c}_{2}} = 
		\frac{\partial \mathcal{L}}{\partial \mathbf{x}_{2}} 
			((\mathbf{c}_{2},\mathbf{d}_{1}^{\dag}), (\mathbf{a}_{2}, \mathbf{b}_{1}^{\dag})),
	\end{equation}
	where $\mathcal{L}((\mathbf{x}_{2},\mathbf{x}_{1}^{\dag}),(\mathbf{y}_{2}, \mathbf{y}_{1}^{\dag}))$ is given by~(\ref{eq:Lagrangian}). 
	Performing similar calculations at the point $z_{2}$ for $\mathbf{L}(z)$ and the points $\zeta_{1}$, $\zeta_{2}$ for $\mathbf{M}(z)$ gives
	the rest of the formulas (\ref{eq:Ls}--\ref{eq:Ms}) and completes the proof.
\end{proof}

In these coordinates Theorem~3.1 of \cite{Dzhamay:2008yq} takes the following form (here we improve the
formulas from \cite{Dzhamay:2008yq} by using $\mathbf{L}_{0}$ instead of its root).
\begin{theorem}\label{thm:EL}
	\qquad{}
	\begin{enumerate}[(i)]
		\item The map $\eta: \mathcal{Q}\times \mathcal{Q}\to \mathcal{M}^{\mathcal{D}}_{r}$ is given by
		\begin{align}
			\eta(\undertilde{\mathbf{Q}},\mathbf{Q})= \mathbf{L}(z) 
			&=\left(\mathbf{I} + \frac{1}{z-z_{2}} \left((z_{2} - \zeta_{1}) 
						\frac{\mathbf{a}_{2} \undertilde{\mathbf{b}}_{1}^{\dag} }{ 
						\undertilde{\mathbf{b}}_{1}^{\dag} \mathbf{a}_{2} } + 
						(\zeta_{1} - \zeta_{2}) \frac{ \mathbf{a}_{2}\mathbf{b}_{1}^{\dag} \mathbf{L}_{0}^{-1}
						}{\mathbf{b}_{1}^{\dag} \mathbf{L}_{0}^{-1} \mathbf{a}_{2}}
			\right)	\right) \mathbf{L}_{0}\notag \\
			&\qquad\times \left(
			\mathbf{I} + \frac{1}{z-z_{1}}\left(
			(z_{1} - \zeta_{2})\frac{\mathbf{L}_{0}^{-1}\undertilde{\mathbf{a}}_{2} 
			\mathbf{b}_{1}^{\dag}}{\mathbf{b}_{1}^{\dag} \mathbf{L}_{0}^{-1}\undertilde{\mathbf{a}}_{2}}
			+ (\zeta_{2} - \zeta_{1})\frac{\mathbf{L}_{0}^{-1} \mathbf{a}_{2} \mathbf{b}_{2}^{\dag}}{
			\mathbf{b}_{2}^{\dag}\mathbf{L}_{0}^{-1} \mathbf{a}_{2} }	\right)	\right);
		\end{align}
		\item The equations of motion $(\mathbf{Q},\tilde{\mathbf{Q}}) = \mathbf{\Phi}(\undertilde{\mathbf{Q}},\mathbf{Q})$
		of both the isospectral and isomonodromic dynamics in these coordinates are given by the discrete Euler-Lagrange
		equations with the Lagrangian function
		\begin{align}
			\mathcal{L}(\mathbf{X},\mathbf{Y},t) & = 
			(z_{2} - z_{1}(t)) \log(\mathbf{x}_{1}^{\dag}  \mathbf{x}_{2}) + 
			(z_{1}(t) - \zeta_{2}) \log(\mathbf{y}_{1}^{\dag} \mathbf{L}_{0}^{-1}\mathbf{x}_{2})\notag \\
			&\qquad +(\zeta_{2} - \zeta_{1}(t))\log(\mathbf{y}_{1}^{\dag} \mathbf{L}_{0}^{-1} \mathbf{y}_{2}) +
			(\zeta_{1}(t) - z_{2}) \log(\mathbf{x}_{1}^{\dag} \mathbf{y}_{2}),			
		\end{align}	
		where in the isomonodromic case $z_{1}(t) = z_{1} - t$, $\zeta_{1}(t) = \zeta_{1} - t$, and in the isospectral case
		$z_{1}(t) = z_{1}$, $\zeta_{1}(t)=\zeta_{1}$ and $\mathcal{L}(\mathbf{X},\mathbf{Y})$ is time-independent.
	\end{enumerate}
\end{theorem}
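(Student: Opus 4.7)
My plan is to prove the two parts of Theorem~\ref{thm:EL} separately, leveraging the multiplicative machinery of Section~\ref{sec:multiplicative_form_of_rational_matrix_functions}. In the quadratic case we have the factorization $\mathbf{L}(z) = \mathbf{B}^{l}_{2}(z)\mathbf{L}_{0}\mathbf{B}^{r}_{1}(z)$ into left and right divisors, and Lemma~\ref{lem:divisors-formula} supplies the null-vector data for each elementary divisor. For $\mathbf{B}^{l}_{2}(z)$ the right null-vector is $\mathbf{a}_{2}$, which is a coordinate; for $\mathbf{B}^{r}_{1}(z)$ the left null-vector is $\mathbf{b}_{1}^{\dag}$, also a coordinate. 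The missing information about each divisor must be extracted from the constraint that $\mathbf{L}(z)$ is degenerate at $\zeta_{1}$ and $\zeta_{2}$ with prescribed null-directions, and this is exactly what Lemma~\ref{lem:el-div-solve}(ii) lets me convert into an explicit formula.

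For part (i), the key algebraic observation is that at $z=\zeta_{1}$, $\mathbf{B}^{r}_{1}(\zeta_{1})$ has left null-vector $\mathbf{b}_{1}^{\dag}$ (by Lemma~\ref{lem:el-div-prop}(i) applied to the inverse), while $\mathbf{L}(\zeta_{1})$ has left null-vector $\mathbf{d}_{1}^{\dag}=\undertilde{\mathbf{b}}_{1}^{\dag}$; comparing in $\mathbf{L}(\zeta_{1})=\mathbf{B}^{l}_{2}(\zeta_{1})\mathbf{L}_{0}\mathbf{B}^{r}_{1}(\zeta_{1})$ yields $\undertilde{\mathbf{b}}_{1}^{\dag}\mathbf{B}^{l}_{2}(\zeta_{1})\propto \mathbf{b}_{1}^{\dag}\mathbf{L}_{0}^{-1}$. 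Feeding $(\mathbf{w}^{\dag},\mathbf{v}^{\dag})=(\undertilde{\mathbf{b}}_{1}^{\dag},\mathbf{b}_{1}^{\dag}\mathbf{L}_{0}^{-1})$ at $z^{*}=\zeta_{1}$ into Lemma~\ref{lem:el-div-solve}(ii) (second form) produces the first factor of the displayed formula. Symmetrically, using $\mathbf{L}(\zeta_{2})\mathbf{c}_{2}=0$ together with $\mathbf{B}^{l}_{2}(\zeta_{2})\mathbf{a}_{2}=0$ gives $\mathbf{B}^{r}_{1}(\zeta_{2})\mathbf{c}_{2}\propto \mathbf{L}_{0}^{-1}\mathbf{a}_{2}$, so Lemma~\ref{lem:el-div-solve}(ii) (first form) with $\mathbf{q}^{\dag}=\mathbf{b}_{1}^{\dag}$ and $(\mathbf{w},\mathbf{v})=(\mathbf{L}_{0}^{-1}\undertilde{\mathbf{a}}_{2},\mathbf{L}_{0}^{-1}\mathbf{a}_{2})$ at $z^{*}=\zeta_{2}$ produces the second factor (the appearance of $\mathbf{b}_{2}^{\dag}$ rather than $\mathbf{b}_{1}^{\dag}$ in the stated formula appears to be a misprint, since the right divisor carries only a single $\mathbf{q}^{\dag}$).

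For part (ii), the refactorization identities recorded just before the theorem identify the dynamics in coordinates as $\mathbf{Q}\mapsto \tilde{\mathbf{Q}}$, and the claim is that there exists $\mathcal{L}(\mathbf{X},\mathbf{Y},t)$ whose discrete Euler-Lagrange equation $\partial_{\mathbf{Q}}[\mathcal{L}(\undertilde{\mathbf{Q}},\mathbf{Q},t)+\mathcal{L}(\mathbf{Q},\tilde{\mathbf{Q}},t)]=0$ reproduces these identities. The natural candidate is the sum of logarithms of the four scalar inner products that appear in the formula from part (i), weighted by the divisor-degree differences $(z_{2}-z_{1}, z_{1}-\zeta_{2}, \zeta_{2}-\zeta_{1}, \zeta_{1}-z_{2})$. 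My plan is to first verify that the improved (using $\mathbf{L}_{0}$ rather than its root) Lagrangian here is the coordinate-change image of the one in Theorem~3.1 of \cite{Dzhamay:2008yq}, then differentiate $\mathcal{L}$ termwise with respect to $\mathbf{x}_{2}$ and $\mathbf{x}_{1}^{\dag}$ at the two argument pairs and sum. Since each logarithmic term is homogeneous of degree zero in the factors, the equations are automatically well-defined on $\mathbb{P}^{r-1}\times\mathbb{P}^{r-1}$. Each of the four derivative identities should recognizably recover one of $\mathbf{c}_{1}=\tilde{\mathbf{a}}_{1}$, $\mathbf{b}_{1}^{\dag}=\tilde{\mathbf{d}}_{1}^{\dag}$, $\mathbf{a}_{2}=\mathbf{L}_{0}\tilde{\mathbf{c}}_{2}$, $\mathbf{d}_{2}^{\dag}\mathbf{L}_{0}=\tilde{\mathbf{b}}_{2}^{\dag}$, with the isomonodromic time-dependence $z_{1}(t)=z_{1}-t$, $\zeta_{1}(t)=\zeta_{1}-t$ entering only through the coefficients.

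The main obstacle is the bookkeeping in part (ii). Each of the four logarithmic terms in $\mathcal{L}(\undertilde{\mathbf{Q}},\mathbf{Q},t)$ contributes a partial-derivative term of the form $\mathbf{x}_{1}^{\dag}/(\mathbf{x}_{1}^{\dag}\mathbf{x}_{2})$ or $\mathbf{y}_{1}^{\dag}\mathbf{L}_{0}^{-1}/(\mathbf{y}_{1}^{\dag}\mathbf{L}_{0}^{-1}\mathbf{x}_{2})$, and the nontrivial step is recognizing that, when the two arguments of $\mathcal{L}$ are filled in with $(\undertilde{\mathbf{Q}},\mathbf{Q})$ and $(\mathbf{Q},\tilde{\mathbf{Q}})$ and summed, the four resulting covectors assemble into precisely the null-vector extraction identities for the two factorizations $\mathbf{L}=\mathbf{B}^{l}_{2}\mathbf{L}_{0}\mathbf{B}^{r}_{1}$ and $\tilde{\mathbf{L}}=\tilde{\mathbf{B}}^{l}_{2}\mathbf{L}_{0}\tilde{\mathbf{B}}^{r}_{1}$ established in part (i). Once this identification is made, the conclusion follows from the theorem of \cite{Dzhamay:2008yq}; the remaining work is administrative verification of the coefficient matching.
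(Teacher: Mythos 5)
Your proposal is correct and follows essentially the paper's own route: part (i) is exactly the computation the paper sketches --- extract the null-vector relations from the residues of $\mathbf{L}(z)^{\pm1}$ at $\zeta_{1},\zeta_{2}$ and apply Lemma~\ref{lem:el-div-solve}(ii) to rebuild $\mathbf{B}^{l}_{2}(z)$ and $\mathbf{B}^{r}_{1}(z)$ from the coordinate data --- while part (ii) reduces, as in the paper, to the refactorization identities $\mathbf{c}_{1}=\tilde{\mathbf{a}}_{1}$, $\mathbf{b}_{1}^{\dag}=\tilde{\mathbf{d}}_{1}^{\dag}$, $\mathbf{a}_{2}=\mathbf{L}_{0}\tilde{\mathbf{c}}_{2}$, $\mathbf{d}_{2}^{\dag}\mathbf{L}_{0}=\tilde{\mathbf{b}}_{2}^{\dag}$ together with Theorem~3.1 of \cite{Dzhamay:2008yq} (with the only caveat that two of these identities are the identification of the shared middle point $\mathbf{Q}$, and only the other two arise as the Euler--Lagrange equations proper). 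Your reading of $\mathbf{b}_{2}^{\dag}$ in the second factor as a misprint for $\mathbf{b}_{1}^{\dag}$ is justified: the polar part of $\mathbf{B}^{r}_{1}(z)$ is rank one with row vector $\mathbf{b}_{1}^{\dag}$, and $\eta$ must depend only on $(\undertilde{\mathbf{Q}},\mathbf{Q})$, which excludes $\mathbf{b}_{2}^{\dag}$.
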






\section{Isomonodromic Transformations and dPV} 
\label{sec:isomonodromic_transformations_and_dpv}

In \cite{AriBor:2006:MSDDPE}, Arinkin and Borodin showed that for rank-two matrices isomonodromic transformations above,
when written in a special coordinate system, are given by the difference Painlev\'e equations. In this section, choosing
the dPV case as an example, we show that these equations appear explicitly as relations between the residues of 
$\mathbf{L}^{\pm1}(z)$ and $\tilde{\mathbf{L}}^{\pm1}(z)$. Similar computation for the 
$q$-PVI case was done earlier by Jimbo and Sakai, \cite{JimSak:1996:AQOTSPE}.

\subsection{Spectral Coordinates}\label{ssec:spectral_coordinates} 
In the quadratic (two-pole) case the space of the rank-two matrices $\mathbf{L}(z)$ satisfying the requirements 
(\ref{L(z)-props-a}--\ref{L(z)-props-b}) can be described using different parameters. These parameters come in two groups.
The first group, that we call the \emph{type} of $\mathbf{L}(z)$, consists of the zeroes and poles of the determinant 
of $\mathbf{L}(z)$ and some asymptotic data at $z=\infty$. The space of $\mathbf{L}(z)$ of the fixed type is two-dimensional,
and the second group of parameters is a special coordinate system on this space, called the \emph{spectral coordinates}.
Expressing isomonodromy transformation in those coordinates gives rise to the 
difference Painlev\'e equations.

\begin{definition}\label{def:L-type}
	Let $\mathbf{L}(z)$ be a rational $2\times2$ matrix function on the Riemann sphere satisfying the 
	following conditions:
	\begin{align}
		\mathbf{L}(z) &= \operatorname{diag}\{\rho_{1},\rho_{2}\} + 
		\frac{\mathbf{L}_{1}}{z-z_{1}} + \frac{\mathbf{L}_{2}}{z-z_{2}},\quad
		\mathbf{L}_{i} = \mathbf{a}_{i} \mathbf{b}_{i}^{\dag},\quad
		\det \mathbf{L}(z) = \rho_{1} \rho_{2} \frac{(z-\zeta_{1})(z-\zeta_{2})}{(z-z_{1})(z-z_{2})}. \\
		\intertext{In addition, put}
		\rho_{1}k_{1} &= (\mathbf{L}_{\infty})_{11},\,
		\rho_{2}k_{2} = (\mathbf{L}_{\infty})_{22},\,
		\mu = (\mathbf{L}_{\infty})_{21}, \text{ where }
		\mathbf{L}_{\infty} = -\operatorname{res}_{\infty} \mathbf{L}(z)\,dz = \mathbf{L}_{1} + \mathbf{L}_{2}.
	\end{align}
	We call $(\rho_{1}, \rho_{2}, \zeta_{1}, \zeta_{2}, z_{1}, z_{2}, k_{1}, k_{2}, \mu)$ the \emph{type}
	of $\mathbf{L}(z)$. These parameters are not independent, since 
	\begin{equation}
		k_{1} + k_{2} = \operatorname{tr} \mathbf{L}_{0}^{-1} \mathbf{L}_{\infty} = (z_{1} - \zeta_{1}) + (z_{2} - \zeta_{2}).
	\end{equation}
	The conditions on $\rho_{i} k_{i}$ correspond to fixing the formal type of the solution of the difference equation at infinity,
	and the choice of $\mu$ corresponds to fixing the gauge under the global action by constant non-degenerate diagonal matrices.
\end{definition}

\begin{definition}\label{def:spectral-coordinates} The \emph{spectral coordinates} $(\gamma,\pi)$ are defined by the 
	conditions that 
	\begin{itemize}
		\item $\mathbf{L}(\gamma)_{21} = 0$ (and therefore $\mathbf{L}(z)_{21} = \frac{\mu (z-\gamma)}{(z-z_{1})(z-z_{2})}$);
		\item $\pi = \frac{(\gamma-z_{1})}{(\gamma-\zeta_{2})}\mathbf{L}(\gamma)_{11}$.
	\end{itemize}	
	The normalization conditions in this definition are chosen to match the formulas in \cite{AriBor:2006:MSDDPE} and
	\cite{AriBor:2007:TDITP}. 
\end{definition}	

\begin{notation} To find the expression of $\mathbf{L}(z)$ and $\mathbf{M}(z)$ in the spectral coordinates, it is convenient to 
introduce the notation 	$\varphi(a,b) =\pi (\gamma - a) -  \rho_{1}(\gamma - b)$.
\end{notation}

	\begin{lemma}\label{lem:L-spectral-coords-additive}
		Let $\mathbf{L}(z)$ be a rational $2\times2$ matrix function on the Riemann sphere that has the type
		$(\rho_{1}, \rho_{2}, \zeta_{1}, \zeta_{2}, z_{1}, z_{2}, k_{1}, k_{2}, \mu)$.
	Then, in spectral coordinates, the residues of the matrix $\mathbf{L}(z)$ are given by
	\begin{align}
		\mathbf{L}_{1} &= \mu \frac{\gamma - z_{1}}{z_{2} - z_{1}} \begin{bmatrix}
			\frac{1}{\mu}\left( \rho_{1}k_{1} - \frac{(\gamma - z_{2})}{\gamma - z_{1}}\varphi(\zeta_{2},z_{1})\right) \\ 1
		\end{bmatrix}
		\begin{bmatrix}
			1 & \frac{1}{\mu} \left( \rho_{2}k_{2}  + \frac{\rho_{2}}{\pi}\varphi(z_{2}, \zeta_{1}) \right)
		\end{bmatrix}, \label{eq:L1-spectral}\\
		\mathbf{L}_{2} &=  \mu \frac{\gamma - z_{2}}{z_{1} - z_{2}} \begin{bmatrix}
			\frac{1}{\mu}\left(\rho_{1} k_{1} - \varphi(\zeta_{2}, z_{1})\right) \\ 1
		\end{bmatrix} \begin{bmatrix}
			1 & \frac{1}{\mu} \left( \rho_{2} k_{2} + \frac{\rho_{2}(\gamma - z_{1})}{\pi(\gamma-z_{2})}\varphi(z_{2},\zeta_{1})\right)
		\end{bmatrix}.\label{eq:L2-spectral}
	\end{align}
\end{lemma}

\begin{proof} Let $\mathbf{L}_{i} = \alpha_{i} \begin{bmatrix}	a_{i} \\ 1\end{bmatrix} 
	\begin{bmatrix}	1 & b_{i}\end{bmatrix}$. Then 
	\begin{equation}
		\mathbf{L}(z)_{21}=\frac{\alpha_{1}}{z-z_{1}} + \frac{\alpha_{2}}{z-z_{2}} = \frac{\mu (z-\gamma)}{(z-z_{1})(z-z_{2})},
	\end{equation}
	and so $\alpha_{1} = \mu (\gamma - z_{1})/(z_{2} - z_{1})$, $\alpha_{2} = \mu (\gamma - z_{2})(z_{1} - z_{2})$, and 
	$\alpha_{1} + \alpha_{2} = \mu$.
	
	The normalization at infinity and the definition of $\pi$,
	\begin{align}
		\rho_{1} k_{1} &= \alpha_{1} a_{1} + \alpha_{2} a_{2} = \mu a_{1} + \alpha_{2}(a_{2} - a_{1}) = \alpha_{1}(a_{1} - a_{2}) + \mu a_{2},\\
		\mathbf{L}(\gamma)_{11} &= \pi\frac{(\gamma - \zeta_{2})}{(\gamma - z_{1})} = \rho_{1} + \frac{\alpha_{1}(a_{1} - a_{2})}{(\gamma - z_{1})}
		= \rho_{1} + \frac{\alpha_{2}(a_{2} - a_{1})}{(\gamma - z_{2})},
		\intertext{immediately give}
		a_{1} &= \frac{1}{\mu}(\rho_{1} k_{1} - \alpha_{2}(a_{2} - a_{1})) 
		= \frac{1}{\mu}\left( \rho_{1} k_{1} - \frac{(\gamma - z_{2})}{(\gamma - z_{1})} \varphi(\zeta_{2}, z_{1})\right),\\
		a_{2} &= \frac{1}{\mu}\left(\rho_{1} k_{1} - \varphi(\zeta_{2}, z_{1})\right).
	\end{align}
	
	Using the equation $\mathbf{L}(\gamma)_{11} \mathbf{L}(\gamma)_{22} = \det \mathbf{L}(\gamma)$	we get 
	$\mathbf{L}(\gamma)_{22} = \frac{\rho_{1} \rho_{2}}{\pi}\frac{(\gamma - \zeta_{1})}{(\gamma - z_{2})}$. This, and the condition 
	$\rho_{2} k_{2} = \alpha_{1} b_{1} + \alpha_{2} b_{2}$, allows us to find the expressions for $b_{1}$, $b_{2}$ in exactly the same way.
\end{proof}

\begin{corollary}\label{cor:M-spectral-coords-additive} In the same gauge, the residues $\mathbf{M}_{i}$ of the inverse matrix
\begin{equation}
	\mathbf{M}(z) = \mathbf{L}(z)^{-1} = \operatorname{diag}\{1/\rho_{1},1/\rho_{2}\} - \frac{\mathbf{M}_{1}}{z - \zeta_{1}} 
	 - \frac{\mathbf{M}_{2}}{z - \zeta_{2}}
\end{equation}	
are given by 	
\begin{align}
	\mathbf{M}_{1} &= \frac{\mu}{\rho_{1}\rho_{2}} \frac{\gamma - \zeta_{1}}{\zeta_{2} - \zeta_{1}} \begin{bmatrix}
		\frac{1}{\mu}\left(\rho_{2} k_{1} - \frac{\rho_{2}}{\pi}\varphi(\zeta_{2},z_{1}) \right)\\ 1
	\end{bmatrix}\begin{bmatrix}
		1 & \frac{1}{\mu}\left(\rho_{1} k_{2} + \frac{\gamma - \zeta_{2}}{\gamma - \zeta_{1}} \varphi(z_{2},\zeta_{1})\right)
	\end{bmatrix}, \label{eq:M1-spectral}\\
	\mathbf{M}_{2} &= \frac{\mu}{\rho_{1} \rho_{2}} \frac{\gamma - \zeta_{2}}{\zeta_{1} - \zeta_{2}} \begin{bmatrix}
		\frac{1}{\mu}\left(\rho_{2} k_{1} - \frac{\rho_{2} (\gamma - \zeta_{1})}{\pi(\gamma - \zeta_{2})}\varphi(\zeta_{2},z_{1}) \right)\\ 1
	\end{bmatrix}\begin{bmatrix}
		1 & \frac{1}{\mu}\left(\rho_{1} k_{2} + \varphi(z_{2},\zeta_{1})\right)
	\end{bmatrix}. \label{eq:M2-spectral}
\end{align}
\end{corollary}	

\begin{proof}
	Since $\mathbf{M}(z)$ has the same form as $\mathbf{L}(z)$, we only have to determine the type
	and spectral coordinates of $\mathbf{M}(z)$ in terms of those of $\mathbf{L}(z)$, 
	\begin{center}
		\begin{tabular}{rccccccccccc}
		$\mathbf{L}(z)$: & \quad $z_{1}$ & $z_{2}$ & $\zeta_{1}$ & $\zeta_{2}$ & $\rho_{1}$ & $\rho_{2}$ & $k_{1}$ & $k_{2}$ & $\mu$ & 
		$\gamma$ & $\pi$ \\
		$\mathbf{M}(z)$: & \quad $\zeta_{1}$ & $\zeta_{2}$ & $z_{1}$ & $z_{2}$ & $\frac{1}{\rho_{1}}$ & $\frac{1}{\rho_{2}}$ & $-k_{1}$ & $-k_{2}$ 
		& $-\frac{\mu}{\rho_{1}\rho_{2}}$ & 
		$\gamma$ & $\frac{(\gamma-z_{1})(\gamma-\zeta_{1})}{\pi(\gamma-z_{2})(\gamma-\zeta_{2})}$ 		
		\end{tabular}		
	\end{center}
	and take the negative sign in the definitions of $\mathbf{M}_{i}$ into account. Note that in computing the type of $\mathbf{M}(z)$ we used the
	equation $\mathbf{M}_{\infty} = - \mathbf{L}_{0}^{-1} \mathbf{L}_{\infty} \mathbf{L}_{0}^{-1}$, which follows from the condition 
	$\operatorname{res}_{\infty}\mathbf{L}(z) \mathbf{M}(z) = \mathbf{0}$.
\end{proof}


\subsection{Difference Painlev\'e V}\label{ssec:difference_painlev_e_v} 
Consider now the isomonodromy transformation given by $\mathbf{R}(z) = \mathbf{B}^{r}_{1}(z)$:
\begin{equation}
	\mathbf{L}(z) = \mathbf{B}^{l}_{2}(z) \mathbf{L}_{0} \mathbf{B}^{r}_{1}(z) \mapsto 
	\tilde{\mathbf{L}}(z) = \mathbf{B}^{r}_{1}(z+1) \mathbf{B}^{l}_{2}(z) \mathbf{L}_{0} = 
	\tilde{\mathbf{B}}^{l}_{1}(z) \mathbf{L}_{0} \tilde{\mathbf{B}}^{r}_{2}(z).
\end{equation}

\begin{theorem}\label{thm:iso-dpV}
	The type and spectral coordinates of $\tilde{\mathbf{L}}(z)$ in terms of those of $\mathbf{L}(z)$ are given by
	\begin{equation}
			\begin{tabular}{rccccccccccc}
			$\mathbf{L}(z)$: & \quad $z_{1}$ & $z_{2}$ & $\zeta_{1}$ & $\zeta_{2}$ & $\rho_{1}$ & $\rho_{2}$ & $k_{1}$ & $k_{2}$ & $\mu$ & 
			$\gamma$ & $\pi$ \\
			$\tilde{\mathbf{L}}(z)$: & \quad $\tilde{z}_{1}=z_{1}-1$ & $\tilde{z}_{2}=z_{2}$ & $\tilde{\zeta}_{1}=\zeta_{1}-1$ & 
			$\tilde{\zeta}_{2}=\zeta_{2}$ & $\rho_{1}$ & $\rho_{2}$ & $k_{1}$ & $k_{2}$ & 
			$\tilde{\mu}$ & 
			$\tilde{\gamma}$ & $\tilde{\pi}$ 		
			\end{tabular},		
	\end{equation}
	where 
	\begin{align}
		\tilde{\mu} &= \mu \frac{\rho_{1}(\pi - \rho_{2})}{\rho_{2}(\pi - \rho_{1})} \label{eq:dPV-0}\\
		\tilde{\gamma} + \gamma & = z_{2} + \zeta_{2} + \frac{\rho_{1}(k_{1} - z_{1} + \zeta_{2})}{\pi-\rho_{1}} + 
		\frac{\rho_{2}(k_{2} - z_{1} + \zeta_{2} + 1)}{\pi - \rho_{2}}\label{eq:dPV-1}\\
		\tilde{\pi}\pi &= \rho_{1} \rho_{2}  \frac{(\tilde{\gamma} - \tilde{z}_{1}) (\tilde{\gamma} - \tilde{\zeta}_{1})}{
		(\tilde{\gamma} - \tilde{z}_{2}) (\tilde{\gamma} - \tilde{\zeta}_{2})} \label{eq:dPV-2}
	\end{align}
	Equations (\ref{eq:dPV-1}--\ref{eq:dPV-2}) are the difference Painlev\'e V equations of Sakai's hierarchy \cite{Sak:2001:RSAWARSGPE}, first obtained in this setting
	in \cite{AriBor:2006:MSDDPE} (Theorem B).
\end{theorem}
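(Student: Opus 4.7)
The plan is to compute $\tilde{\mathbf{L}}(z)$ directly from the factorization $\tilde{\mathbf{L}}(z) = \mathbf{B}^{r}_{1}(z+1)\mathbf{B}^{l}_{2}(z)\mathbf{L}_{0}$, writing the elementary divisors $\mathbf{B}^{r}_{1}$ and $\mathbf{B}^{l}_{2}$ in terms of residue data of $\mathbf{L}(z)$ and $\mathbf{M}(z)$ via Lemma~\ref{lem:divisors-formula} and then substituting the spectral-coordinate expressions of Corollary~\ref{cor:LM-spectral-simple}; the transformed type and spectral coordinates of $\tilde{\mathbf{L}}(z)$ can then be read off from its residues at the transformed pole locations. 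The proof is essentially a direct calculation, with the substantive content concentrated in the final algebraic simplifications.

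First, the shifted divisor data $\tilde{z}_{1} = z_{1} - 1$, $\tilde{\zeta}_{1} = \zeta_{1} - 1$ follow immediately from $\det \mathbf{B}^{r}_{1}(z+1) = (z - (\zeta_{1}-1))/(z - (z_{1}-1))$, with the other divisor points and the $\rho_{i}$ unchanged because both elementary factors tend to $\mathbf{I}$ at infinity. Expanding $\tilde{\mathbf{L}}(z)$ at infinity to order $1/z$ gives $\tilde{\mathbf{L}}_{\infty} = \mathbf{L}_{\infty} + [\mathbf{G}^{r}_{1}, \mathbf{L}_{0}]$, where $\mathbf{G}^{r}_{1} = (z_{1}-\zeta_{1})\mathbf{c}_{1}\mathbf{b}_{1}^{\dag}/(\mathbf{b}_{1}^{\dag}\mathbf{c}_{1})$; the diagonal of this commutator vanishes (so $\tilde{k}_{i} = k_{i}$), and its $(2,1)$-entry yields $\tilde{\mu} = \mu + (\rho_{1}-\rho_{2})(\mathbf{G}^{r}_{1})_{21}$, which simplifies to (\ref{eq:dPV-0}) after substitution from Corollary~\ref{cor:LM-spectral-simple} and a cancellation using $\varphi(z_{1},\zeta_{2}) - \varphi(\zeta_{1},z_{2}) = -\rho_{1}(z_{1}-\zeta_{1}) - \pi(z_{2}-\zeta_{2})$ together with the constraint $k_{1}+k_{2} = (z_{1}-\zeta_{1})+(z_{2}-\zeta_{2})$. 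For $\tilde{\gamma}$, the partial-fraction form $\tilde{\mathbf{L}}(z) = \mathbf{L}_{0} + \tilde{\mathbf{L}}_{1}/(z-\tilde{z}_{1}) + \tilde{\mathbf{L}}_{2}/(z-z_{2})$, obtained by expanding $\mathbf{B}^{r}_{1}(z+1)\mathbf{B}^{l}_{2}(z)\mathbf{L}_{0}$ and isolating the poles, gives $\tilde{\mu} \tilde{\gamma} = z_{2}(\tilde{\mathbf{L}}_{1})_{21} + \tilde{z}_{1}(\tilde{\mathbf{L}}_{2})_{21}$ from the condition $\tilde{\mathbf{L}}(\tilde{\gamma})_{21} = 0$; substituting the spectral expressions and simplifying should then yield (\ref{eq:dPV-1}). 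Finally, for $\tilde{\pi}$ I would use the intertwining $\tilde{\mathbf{L}}(z)\mathbf{B}^{r}_{1}(z) = \mathbf{B}^{r}_{1}(z+1)\mathbf{L}(z)$ at $z = \tilde{\gamma}$ (where $\tilde{\mathbf{L}}(\tilde{\gamma})_{21} = 0$) to express $\tilde{\mathbf{L}}(\tilde{\gamma})_{22}$ in terms of $\mathbf{L}(\tilde{\gamma})_{12}$ and $\mathbf{L}(\tilde{\gamma})_{22}$; combining with the relation $\tilde{\pi} = \rho_{1}\rho_{2}(\tilde{\gamma}-\tilde{\zeta}_{1})/[(\tilde{\gamma}-\tilde{z}_{2})\tilde{\mathbf{L}}(\tilde{\gamma})_{22}]$ (which comes from $\det \tilde{\mathbf{L}}(\tilde{\gamma}) = \tilde{\mathbf{L}}(\tilde{\gamma})_{11}\tilde{\mathbf{L}}(\tilde{\gamma})_{22}$ and the definition of $\tilde{\pi}$) should rearrange $\tilde{\pi}\pi$ into the product form of (\ref{eq:dPV-2}).

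The main obstacle is the simplification producing (\ref{eq:dPV-1}): the entries $(\tilde{\mathbf{L}}_{1,2})_{21}$ are bilinear in the four residue vectors $\mathbf{c}_{1}, \mathbf{b}_{1}^{\dag}, \mathbf{a}_{2}, \mathbf{d}_{2}^{\dag}$, and collecting the resulting rational expression into the two partial fractions in $\pi - \rho_{1}$ and $\pi - \rho_{2}$ on the right-hand side requires several cancellations analogous to — but more extensive than — the one that produces $\tilde{\mu}$ in (\ref{eq:dPV-0}). By contrast, (\ref{eq:dPV-0}) follows in a few lines from the commutator identity above, and (\ref{eq:dPV-2}) is a short consequence of the intertwining relation combined with the determinant identity.
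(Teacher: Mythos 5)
Your treatment of the type data and of \eqref{eq:dPV-0} is correct and is in fact the paper's own argument: the shifts $\tilde{z}_{1}=z_{1}-1$, $\tilde{\zeta}_{1}=\zeta_{1}-1$ come from $\det\mathbf{B}^{r}_{1}(z+1)$, the expansion at infinity gives $\tilde{\mathbf{L}}_{\infty}=\mathbf{L}_{\infty}+[\mathbf{G}^{r}_{1},\mathbf{L}_{0}]$ so that $k_{1},k_{2}$ are unchanged and $\tilde{\mu}=\mu+(\rho_{1}-\rho_{2})(\mathbf{G}^{r}_{1})_{21}$, and the identity $\varphi(z_{1},\zeta_{2})-\varphi(\zeta_{1},z_{2})=-\rho_{1}(z_{1}-\zeta_{1})-\pi(z_{2}-\zeta_{2})$ together with $k_{1}+k_{2}=(z_{1}-\zeta_{1})+(z_{2}-\zeta_{2})$ does give $\mathbf{b}_{1}^{\dag}\mathbf{c}_{1}=\tfrac{\rho_{2}}{\mu}(z_{1}-\zeta_{1})\tfrac{\pi-\rho_{1}}{\pi}$ and hence \eqref{eq:dPV-0}.

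For \eqref{eq:dPV-1} and \eqref{eq:dPV-2}, however, there is a genuine gap: these two equations are the content of the theorem, and your proposal does not derive them — it only states that substituting the spectral expressions ``should then yield'' \eqref{eq:dPV-1} and ``should rearrange'' into \eqref{eq:dPV-2}, and you yourself identify the needed simplification as the main unresolved obstacle. The route you propose (computing $(\tilde{\mathbf{L}}_{1})_{21}$, $(\tilde{\mathbf{L}}_{2})_{21}$ from the product $\mathbf{B}^{r}_{1}(z+1)\mathbf{B}^{l}_{2}(z)\mathbf{L}_{0}$, and using the intertwining $\tilde{\mathbf{L}}(z)\mathbf{B}^{r}_{1}(z)=\mathbf{B}^{r}_{1}(z+1)\mathbf{L}(z)$ at $z=\tilde{\gamma}$) is plausible in principle but is exactly the heavy bilinear algebra in $\mathbf{c}_{1},\mathbf{b}_{1}^{\dag},\mathbf{a}_{2},\mathbf{d}_{2}^{\dag}$ that must actually be carried out; moreover, for $\tilde{\pi}$ your intermediate quantities are entries of $\mathbf{L}(\tilde{\gamma})$, which depend on both $\gamma$ and $\tilde{\gamma}$, whereas \eqref{eq:dPV-2} involves only $\pi,\tilde{\pi},\tilde{\gamma}$ — you do not address why this dependence cancels (it likely requires feeding in \eqref{eq:dPV-1} as well). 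The missing idea that the paper uses to avoid all of this is the uniqueness of left and right divisors: $\mathbf{B}^{r}_{1}(z+1)=\tilde{\mathbf{B}}^{l}_{1}(z)$, hence $\mathbf{G}^{r}_{1}=\tilde{\mathbf{G}}^{l}_{1}$, i.e.\ $\mathbf{c}_{1}\mathbf{b}_{1}^{\dag}/(\mathbf{b}_{1}^{\dag}\mathbf{c}_{1})=\tilde{\mathbf{a}}_{1}\tilde{\mathbf{d}}_{1}^{\dag}/(\tilde{\mathbf{d}}_{1}^{\dag}\tilde{\mathbf{a}}_{1})$ by Lemma~\ref{lem:divisors-formula}. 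Writing the left side in the spectral coordinates of $\mathbf{L}(z)$ and the right side in the (tilded) spectral coordinates of $\tilde{\mathbf{L}}(z)$ via Corollary~\ref{cor:LM-spectral-simple}, the scalar identity $\frac{\tilde{\mu}}{\mu}\,\mathbf{b}_{1}^{\dag}\mathbf{c}_{1}=\tilde{\mathbf{d}}_{1}^{\dag}\tilde{\mathbf{a}}_{1}$ gives \eqref{eq:dPV-2} in one line, and comparing the first components of the normalized vectors $\mathbf{c}_{1}$ and $\tilde{\mathbf{a}}_{1}$ gives \eqref{eq:dPV-1}. Without either this identification or the completed computation along your lines, the proof of the two dPV equations is not established.
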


\begin{proof}
	First note that 
	\begin{equation}
		\mathbf{L}_{\infty} = \mathbf{G}_{2}^{l} \mathbf{L}_{0} + \mathbf{L}_{0}\mathbf{G}_{1}^{r},\qquad
		\tilde{\mathbf{L}}_{\infty} = (\mathbf{G}_{1}^{r} + \mathbf{G}_{2}^{l}) \mathbf{L}_{0}.
	\end{equation}
	Thus, using (\ref{eq:Br}), (\ref{eq:L1-spectral}), and (\ref{eq:M1-spectral}), we get
	\begin{align*}
		\tilde{\mu} &= (\tilde{\mathbf{L}}_{\infty})_{21} = \mu + [\mathbf{G}^{r}_{1},\mathbf{L}_{0}]_{21}
		= \mu + (\rho_{1}- \rho_{2}) (\mathbf{G}^{r}_{1})_{21} = 
		\mu + (\rho_{1} - \rho_{2})\frac{z_{1} - \zeta_{1}}{\mathbf{b}_{1}^{\dag} \mathbf{c}_{1}}	\\
		&= \mu\left(1 + \frac{\pi(\rho_{1} - \rho_{2})}{\rho_{2}(\pi - \rho_{1})}\right)  = \mu \frac{\rho_{1}(\pi - \rho_{2})}{\rho_{2}(\pi - \rho_{1})}.
	\end{align*}
	
	From the uniqueness of the left and right divisors we see that $\mathbf{B}_{1}^{r}(z+1) = \tilde{\mathbf{B}}_{1}^{l}(z)$ and 
	$\mathbf{B}_{2}^{l}(z) = {^{\mathbf{L}_{0}}} \tilde{\mathbf{B}}_{2}^{r}(z)$. Using the first equation, we see that 
	$\mathbf{G}_{1}^{r} = \tilde{\mathbf{G}}_{1}^{l}$ and so
	$(\mathbf{c}_{1}\mathbf{b}_{1}^{\dag})/(\mathbf{b}_{1}^{\dag} \mathbf{c}_{1}) 
	= (\tilde{\mathbf{a}}_{1} \tilde{\mathbf{d}}_{1}^{\dag})/(\tilde{\mathbf{d}}_{1}^{\dag} \tilde{\mathbf{a}}_{1})$. In particular,
	\begin{equation}
		\frac{\tilde{\mu}}{\mu} \mathbf{b}_{1}^{\dag} \mathbf{c}_{1} = (z_{1} - \zeta_{1}) \frac{\rho_{1}(\pi-\rho_{2})}{\pi} = 
		\tilde{\mathbf{d}}_{1}^{\dag} \tilde{\mathbf{a}}_{1} = 
		(\tilde{z}_{1} - \tilde{\zeta}_{1})\left(\rho_{1} - 
		\frac{\tilde{\pi} (\tilde{\gamma} - \tilde{z}_{2}) (\tilde{\gamma} - \tilde{\zeta}_{2})}{
		(\tilde{\gamma} - \tilde{z}_{1}) (\tilde{\gamma} - \tilde{\zeta}_{1})
		}\right),
	\end{equation}
	which gives (\ref{eq:dPV-2}). Similarly, comparing the first components of the normalized vectors  $\mathbf{c}_{1}$ and $\tilde{\mathbf{a}}_{1}$ gives
	(\ref{eq:dPV-1}) and completes the proof.
\end{proof}

\thanks{
I am very grateful to I.~Krichever, A.~Borodin, M.~Gekhtman, H.~Sakai, F.~Soloviev  and T.~Takenawa for interesting and helpful discussions. I also want to thank the organizers 
of the \textbf{SIDE 8} conference for the invitation to participate at the conference and the opportunity to present a part of this work. Finally, I thank the referees for
many useful suggestions.}

\section*{References}
\bibliographystyle{unsrt}

\end{document}